\newtheorem{theorem}{Theorem}[section]
\newtheorem{definition}{Definition}[section]
\newtheorem{proposition}[theorem]{Proposition}
\def\Pcum{{\cal{P}}_{\mbox{\scriptsize \rm{cum}}}}
\journal{JSTAT}
\begin{document}

\begin{frontmatter}



\title{Corona graphs as a model of small-world networks}


\author{Qian Lv, Yuhao Yi and Zhongzhi Zhang\corref{corZhang}}

\address{ \fnref{lab}Shanghai Key Laboratory of Intelligent Information
Processing, School of Computer Science, Fudan University, Shanghai 200433, China}
\cortext[corZhang]{Corresponding author. E-mail address: zhangzz@fudan.edu.cn}
\begin{abstract}
We introduce recursive corona graphs as a model of small-world networks. We investigate analytically the critical characteristics of the model, including order and size,  degree distribution, average path length, clustering coefficient, and the number of spanning trees, as well as  Kirchhoff index. Furthermore, we study the spectra for the adjacency matrix and the Laplacian matrix for the model. We obtain explicit results for all the quantities of the recursive corona graphs, which are similar to those observed in real-life networks.
\end{abstract}

\begin{keyword}
Corona product \sep Small-world \sep Spectra \sep Laplacian matrix



\end{keyword}

\end{frontmatter}



\section{Introduction}

For decades we want to know what a graph looks like. We want to reveal the principles of the networks' behaviour covered by their complex topology and dynamics. We want to learn about how the network structure evolves over time and how it affects the properties of dynamical processes on it. For the nature of decentrality of real networks, it is hard to observe the networks directly. Instead we observe them by taking snapshots of their network structure and content and keep updating them. Yet this gives little information about the future, since many of them keep growing over time. Thus it is desirable to set up models to fit real networks in both structure and functionality. We can use the models to mimic real-life networks. We also expect that the properties of the models can be proven rigorously, thus we can find the relations between topological and dynamical properties of networks. Even if it is hard to give closed-formed expressions for some quantities, it would be nice to make them tractable for convenience of estimation.

Among various network models, the ER graph proposed by Erd\"{o}s and R\'{e}nyi~\cite{ErRe60} is the earliest one. It generates a random graph by choosing a constant probability for joining every pair of vertices in the network. The model exhibit interesting statistical properties and has been well studied by many people. However the model lacks some important properties of real-world networks. For example many real-world networks exhibit small-world property~\cite{Mi67,WaSt98,Kl00STOC} with their diameters growing logarithmically in the number of  vertices, while maintaining a high clustering coefficient. The Watts-Strogatz (WS) model~\cite{WaSt98} is a typical graph model with the small-world effect. 
Nevertheless, because of its randomness, many of its properties cannot be derived precisely, for example eigenvalues of the adjacency and Laplacian matrices. Thus deterministic models are often used to mimic complex networks~\cite{ZhCo11}, since their structural~\cite{ZhWu15} and spectral~\cite{LiDoQiZh15} characteristics  can be determined analytically. In addition to the small-world effect, another important feature of a network is degree distribution. Many real-world networks exhibit a heavy-tail distribution while some networks have an exponential distribution~\cite{Ne03,BaWe00,iJaSo01}. The famous preference attachment~\cite{BaAl99,AlBa02,KlKuRaRaTo99,KuRaRaSrTo99} scheme successfully described the growing process of networks with heavy-tailed degree distribution. This work, like the WS model~\cite{WaSt98}, leads to a network with an exponential degree distribution.

Recently graph (matrix) products have been applied to modelling graphs with the same properties as real-life networked systems, such as the Cartesian product~\cite{ImKl00}, dot product~\cite{YoSc07} and Kronecker product~\cite{We62,LeFa07,Ma07,Le09,LeChKlFaGh10}. A merit of such methods is that the graph/matrix products facilitate estimate of the properties of the generated graphs. 

In this paper we introduce a recursive way to generate small-world networks with an exponential degree distribution, based on corona product of graphs. We obtain exact solutions to many structural properties of the networks. Moreover, we derive all the eigenvalues for their adjacency matrix and Laplacian matrix, which are provided in a recursive way. Based on the obtained eigenvalues, we calculate the number of spanning trees, as well as the Kirchhoff index of the networks. 

\section{Graph Construction}

In this paper, we use corona product to generate a small-world network model. Literatures about the corona product and its related graphs are partly established~\cite{BaPaSa07,Li14}. Let $G=(V(G),E(G))$ be the embedded graph of a network. Suppose the graph is undirected and has vertex set $V(G) = \{1, 2, \dots , N\}$ and edge set $E(G)\subseteq V(G)\times V(G)$. We define the number of vertices $N$ as the order of the graph, and the number of edges as the size of the graph, denoted as $M=|E(G)|$.

Given two graphs $G_1$ and $G_2$ , their corona product $G_1\circ G_2$ is defined as follows.
\begin{definition}\label{coronaDef}
Let $G_1$ and $G_2$ be two graphs with disjoint vertex sets. $G_1$ has $N_1$ vertices and $G_2$ has $N_2$ vertices. Their corona product $G_1\circ G_2$ is a new graph which consists of one copy of $G_1$ nd $N_1$ copies of $G_2$. The $i$-th vertex of $G_1$ is joint by a new edge with every vertex in the $i$-th copy of $G_2$.
\end{definition}

In this paper we investigate the case where $G_2$ is the $q$-complete graph, thus we give the definition of the recursive corona graph.

\begin{definition}\label{recCorDef}
Let $K_q$ be the $q$-complete graph ($q\geqslant2$), then the $g$th generation of recursive corona graph (RCG) $C_q(g+1)$ is defined as the corona product of the previous generation of RCG $C_q(g)$ and $K_q$. More formally, $C_q(g+1)$ is defined as $C_q(g+1)=C_q(g)\circ K_q$, $g \geqslant 0$, with the initial condition $C_q(0)=K_q$.
\end{definition}

Figure~\ref{examples} illustrates the construction process for a particular network $C_4(g)$.

\begin{figure}[htbp]
\begin{center}
\includegraphics[width=.5\linewidth]{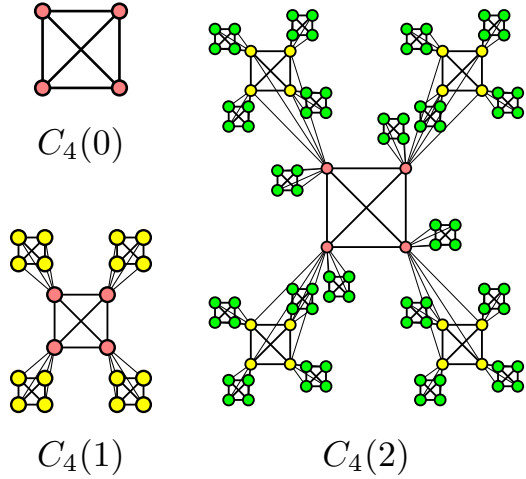}
\end{center}
\caption[kurzform]{Construction of the recursive corona graph, showing the first three generations of $C_4(g)$}\label{examples}
\end{figure}
\section{Structural properties}
In this section we derive several important quantities of the RCG, showing that it is an appropriate model for the small-world complex networks. Thanks to the deterministic feature of $C_q(g)$, we can give exact expressions for the properties of the graph. We will give the explicit result for its order, size, degree distribution, degree correlation, average distance, clustering coefficient, number of spanning tree and Kirchhoff index.


We denote by $N(g)$ and $M(g)$ respectively the order and the size of $C_q(g)$. Next we show how to derive these quantities. Assume that the number of vertices and the number of edges that get newly generated at step $g$ is denoted as $L_V(g)$ and $L_E(g)$. Then it is obvious that we have $L_V(g)=q\times N(g-1)$ for $g\geqslant 1$, which leads to the result of $N(g)$ along with the initial condition $N(0)=q$. With respect to the size of the network, we have $L_E(g)=q(q-1)/2\cdot N(g-1)+q\cdot N(g-1)$, $g\geqslant 1$, and the initial condition $M(0)=q(q-1)/2$.

\begin{proposition}
The order and size of the graph $C_q(g)=(V(g),E(g))$ are, respectively.
\begin{equation}
N(g)=q(q+1)^g
\end{equation}
and
\begin{equation}
M(g)=\frac{1}{2} q \left((q+1)^{g+1}-2\right)
\end{equation}
\end{proposition}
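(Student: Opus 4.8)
The plan is to solve the two first-order recursions for $N(g)$ and $M(g)$ whose one-step increments $L_V(g)$ and $L_E(g)$ are already recorded in the paragraph preceding the statement, and then to close each one either by direct summation or by a one-line induction.

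First I would treat the order. Forming $C_q(g)=C_q(g-1)\circ K_q$ retains the $N(g-1)$ vertices of $C_q(g-1)$ and attaches, to each of them, one fresh copy of $K_q$ carrying $q$ new vertices; hence $L_V(g)=q\,N(g-1)$ and $N(g)=N(g-1)+L_V(g)=(q+1)N(g-1)$ for $g\geqslant1$, with $N(0)=q$ since $C_q(0)=K_q$. Iterating this relation immediately yields $N(g)=q(q+1)^g$.

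Next I would treat the size. Going from $C_q(g-1)$ to $C_q(g)$ keeps the $M(g-1)$ old edges, adds $q(q-1)/2$ internal edges in each of the $N(g-1)$ new copies of $K_q$, and adds the $q$ ``spoke'' edges joining the $i$-th old vertex to its new copy; so $L_E(g)=\big(q(q-1)/2+q\big)N(g-1)=\tfrac{q(q+1)}{2}N(g-1)$, and substituting the formula for $N(g-1)$ gives
\begin{equation}
M(g)=M(g-1)+\frac{q(q+1)}{2}N(g-1)=M(g-1)+\frac{q^{2}(q+1)^{g}}{2},\qquad g\geqslant1,
\end{equation}
together with the initial value $M(0)=q(q-1)/2$.

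Finally I would unroll this recursion as $M(g)=\frac{q(q-1)}{2}+\frac{q^{2}}{2}\sum_{k=1}^{g}(q+1)^{k}$, and evaluating the geometric sum $\sum_{k=1}^{g}(q+1)^{k}=\frac{(q+1)^{g+1}-(q+1)}{q}$ collapses this to $\frac{q}{2}\big((q+1)^{g+1}-2\big)$ once the constant terms $(q-1)-(q+1)=-2$ are combined. Equivalently, the closed form can be confirmed by induction: the base case $g=0$ returns $q(q-1)/2$, and the step uses only $\tfrac{q}{2}(q+1)^{g+1}+\tfrac{q^{2}}{2}(q+1)^{g+1}=\tfrac{q}{2}(q+1)^{g+2}$. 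I do not expect a genuine obstacle here; the only points demanding care are the correct tally of the three kinds of new edges contributing to $L_E(g)$ and the index shift when $N(g-1)=q(q+1)^{g-1}$ is inserted into the size recursion.
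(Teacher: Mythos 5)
Your proposal is correct and follows essentially the same route as the paper, which simply records the increments $L_V(g)=q\,N(g-1)$ and $L_E(g)=\left(\tfrac{q(q-1)}{2}+q\right)N(g-1)$ with the initial values $N(0)=q$, $M(0)=q(q-1)/2$ and solves the resulting recursions. Your explicit geometric-sum evaluation and induction check merely spell out the steps the paper leaves implicit; the arithmetic is right.
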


The average degree is $\bar{\delta}(g)=-2 (q+1)^{-g}+q+1$ which tends to $q+1$ for large $g$. Note that many real-life networks are sparse and their average degree tend to a constant value.

\subsection{Degree distribution}

The degree distribution ${\cal{P}}(\delta)$ for a network is a function indicating the fraction of vertices with degree $\delta$ over all vertices. The degree distribution is a very important characteristic of a graph. It is essential to the analysis of many other structural properties.

The cumulative degree distribution~\cite{Ne03} is defined as
$$\Pcum(\delta)=\sum_{\delta'=\delta}^{\infty}{\cal{P}}(\delta')\,,$$
which is often used to analyse the degree distribution of a graph. The quantity gives the fraction of vertices whose degree $\delta^{\prime}$ is greater than or equal to $\delta$. In addition, networks whose degree distributions are exponential:
${\cal{P}}(\delta) \sim e^{-\alpha\delta}$, have also an exponential
cumulative distribution with the same exponent:
\begin{equation} \label{cum-deg-dist}
\Pcum(\delta)=\sum_{\delta'=\delta}^{\infty}P(\delta')\approx
\sum_{\delta'=\delta}^{\infty}e^{-\alpha\delta'}=\left(\frac{e^{\alpha}}{e^{\alpha}-1}\right)
e^{-\alpha\delta}\,.
\end{equation}

Next we investigate the degree distribution of $C_q(g)$. We find that at time $g=0$, the network has $q$ vertices of degree $q-1$. Now we study the degree of some vertex $v$ at step $g$. Let the value be $\delta_v(g)$, we look in details about how the quantity evolves. We assume that vertex $v$ is added to the network at step $g_v$, ($g_v\geqslant 0$). For any $g_v>0$, there are $\delta_v(g_v)=q$, where $q-1$ edges link to other vertices in $K_q$ and the other $1$ edge links to $C_q(g-1)$. At every step every existed vertex increase its degree by $q$.

\begin{theorem}
The cumulative degree distribution of the graph $C_q(g)$ follows an exponential distribution: $\Pcum(\delta)\sim (q+1)^{-\frac{\delta}{q}+1}$.
\end{theorem}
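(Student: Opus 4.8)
The plan is to compute the tail count $\#\{v : \delta_v(g) \geqslant \delta\}$ exactly and then divide by $N(g)$. The key observation, already half-recorded in the text above, is that the degree of a vertex is determined entirely by the step at which it was born. A vertex $v$ born at step $g_v \geqslant 1$ enters with degree $q$ and then gains exactly $q$ new incident edges at each subsequent step, since at every step it is joined to the $q$ vertices of the fresh copy of $K_q$ hung on it by the corona product; hence $\delta_v(g) = q + q(g-g_v) = q(g-g_v+1)$. The $q$ seed vertices, born at step $0$ inside $C_q(0)=K_q$, instead have degree $(q-1)+qg$ at step $g$. Consequently the degrees of all recursively added vertices lie in $\{q,2q,\dots,qg\}$, so it suffices to evaluate $\Pcum$ at $\delta = kq$ and read off the decay rate.

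First I would record the birth counts: the number of vertices created at step $g_v \geqslant 1$ is $L_V(g_v) = q\,N(g_v-1) = q^2(q+1)^{g_v-1}$, and there are $q$ seed vertices. For $1 \leqslant k \leqslant g$, a vertex born at $g_v \geqslant 1$ has $\delta_v(g) \geqslant kq$ precisely when $g_v \leqslant g+1-k$, and the $q$ seed vertices (degree $(q-1)+qg$) always qualify in this range. Summing the resulting finite geometric series,
\begin{equation}
\#\{v : \delta_v(g) \geqslant kq\} = q + \sum_{g_v=1}^{g+1-k} q^2(q+1)^{g_v-1} = q + q\bigl((q+1)^{g+1-k}-1\bigr) = q(q+1)^{g+1-k}.
\end{equation}
Dividing by $N(g) = q(q+1)^g$ gives $\Pcum(kq) = (q+1)^{1-k}$, i.e.\ $\Pcum(\delta) = (q+1)\,e^{-\alpha\delta}$ with $\alpha = \frac{\ln(q+1)}{q} > 0$ at the admissible degrees $\delta = kq$. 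This has exactly the exponential form appearing in~(\ref{cum-deg-dist}), and rewriting $(q+1)\,e^{-\alpha\delta} = (q+1)^{-\delta/q+1}$ recovers the stated asymptotics, completing the proof.

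There is no real obstacle here: once the ``degree as a function of birth step'' relation is in place, the argument is a one-line telescoping sum. The two places that need a little care are (i) keeping the $q$ seed vertices separate, because their degree $(q-1)+qg$ is shifted by one relative to the progression $q(g-g_v+1)$ obeyed by the later vertices, a shift that is exactly what makes the final count collapse to $q(q+1)^{g+1-k}$; and (ii) reading the assertion $\Pcum(\delta) \sim (q+1)^{-\delta/q+1}$ as an exact identity on the discrete set $\delta \in \{kq\}$ plus the natural interpolation in between, which is why the statement is phrased with ``$\sim$'' and why only the exponential \emph{rate} is being asserted rather than the leading constant (for which the continuous approximation used in~(\ref{cum-deg-dist}) is slightly off).
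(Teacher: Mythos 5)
Your proof is correct and follows essentially the same route as the paper: determine each vertex's degree from its birth step, count the vertices born up to the relevant threshold via the geometric sum of $L_V(g_v)=q^{2}(q+1)^{g_v-1}$ plus the $q$ seed vertices, and divide by $N(g)=q(q+1)^{g}$. The only difference is cosmetic: you state the exact identity $\Pcum(kq)=(q+1)^{1-k}$ at the admissible degrees, while the paper writes the same count with a floor-function cutoff $\theta$ and then reads off the large-$g$ asymptotic $\Pcum(\delta)\sim(q+1)^{-\delta/q+1}$.
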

\begin{proof}
The degree of vertex $v$ at step $g$, denoted as $\delta_v(g)$, can be written as
\begin{equation}
\delta_v(g+1)=\delta_v(g)+q\,.
\end{equation}
Thus we have
\begin{eqnarray}
\label{degGen}
\delta_v(g)&=&q\cdot(g-g_v+1)\,,(g_v>0)\,,\,and\nonumber\\
\delta_v(g)&=&q\cdot(g+1)-1\,,(g_v=0)\,.
\end{eqnarray}
This means that the numbers of vertices with the degree equal to $q,2q,\cdots,g q,(g+1)q-1$ are, respectively, $q^2 (q+1)^{g-1},q^2 (q+1)^{g-2},\cdots,q^2,q$. 

For a certain value of degree $\delta$, we have $\Pcum(\delta)=\left(N(0)+\sum_{g^\prime=1}^\theta L_V(g^\prime)\right)/N(g)$ where $\theta=\lfloor g-(\delta+q)/q\rfloor$. 
Therefore we can find
\begin{eqnarray}
\Pcum(\delta)&=&\frac{1}{q(q+1)^g}\left(q+\sum_{g^\prime=1}^{\theta}{q^2(q+1)^{g^\prime-1}}\right)\nonumber\\
 &=&\frac{1}{(q+1)^{g}}+\frac{q}{(q+1)^g}\sum_{g^\prime=1}^{\theta}(q+1)^{g^\prime-1}\nonumber\\
 &=&(q+1)^{\theta-g}\,.
\end{eqnarray}
For large $g$ we have $$\Pcum(\delta)= (q+1)^{\theta-g}\sim (q+1)^{-\frac{\delta}{q}+1}\,. $$
\end{proof}

\subsection{Degree correlation}
One important parameter for the degree correlation is the average degree of adjacent vertices of all vertices which is referred to as any vertex with degree $\delta$. We denote the parameter by $k_{\rm nn}(\delta)$. If $k_{\rm nn}(\delta)$ increases with $\delta$, this means that the vertices have a tendency to connect to vertices with a similar or larger degree. In this case we claim the graph to be assortative.
The considered value of vertices with degree $\delta$ which is generated at step $g_v$ can be written as
\begin{small}
\begin{equation}
\begin{split}
k_{\rm nn}(\delta)=&\frac{1}{L_V(g_v)\cdot q(g-g_v+1)}\Bigg(q^2(q(g+1)-1)+\sum_{g_p=1}^{g_v-1}L_v(g_p)q\cdot q(g-g_p+1)\\
&+L_v(g_v)(q-1)q(g-g_v+1)+L_v(g_v)\sum_{i=g_v+1}^{g}q\cdot q(g-i+1)\Bigg)\\
=&\frac{1}{2} q (g-g_v+2)+\frac{1+q-2(1+q)^{-g_v+1}}{q(g-g_v+1)}\,.
\end{split}
\end{equation}
\end{small}
According to Eq.(\ref{degGen}), we can express it as
\begin{equation}
k_{\rm nn}(\delta)=\frac{1}{2}(q+\delta)+\frac{q+1-2(1+q)^{-g+\frac{\delta}{q}}}{\delta}\,.\nonumber
\end{equation}
Since $q\leqslant \delta \leqslant q(g+1)-1$, therefore we have $0<(1+q)^{\frac{\delta}{q}-g}<1$, which means
\begin{equation}
k_{\rm nn}(\delta)\approx\frac{1}{2}(q+\delta)+\frac{q}{\delta}\,.
\end{equation}
As for the initial vertices, everyone of them has the same distribution on the degrees of its neighbors. It leads to
\begin{equation}
\begin{split}
k_{\rm nn}(\delta_0)=&\frac{1}{q\cdot(g+1)-1}\left((q-1)(q(g+1)-1)+\sum_{i=1}^g q\cdot q(g-i+1)\right)\\
=&\frac{(g+2)q}{2}-\frac{(g+2) q-2}{2 (g q+q-1)}\,,
\end{split}
\end{equation}
which yields
\begin{equation}
k_{\rm nn}(\delta_0)=\frac{1}{2}\left(\delta_0+q+\frac{1-q}{\delta_0}\right)\,.
\end{equation}
By checking the results we can see that the considered graph is assortative.

\subsection{Average distance}
Given a graph $G=(V,E)$, its {\em average distance} or {\em mean
distance} is defined as: $\mu (G) = \frac{1}{|V(G)|(|V(G)|-1)}
\cdot\sum_{u,v \in V(G)} d(u,v)$ where $d(u,v)$ is the distance between the pair of vertices $u$ and $v$.

\begin{theorem}
The average distance of graph $C_q(g)$ is
\begin{small}
\begin{equation}\label{muDistance}
\mu(C_q(g))=\frac{(q+1)^{-g} \left(2 g q^2 (q+1)^{2 g-1}+(q+1)^g+(q-2)
   \left((q+1)^2\right)^g\right)}{q (q+1)^g-1}\,.
\end{equation}
\end{small}
\end{theorem}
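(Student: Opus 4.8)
\medskip

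The plan is to use the recursion $C_q(g+1)=C_q(g)\circ K_q$ to obtain a linear recurrence for the total distance $\Sigma(g):=\sum_{u,v\in V(g)}d(u,v)$, and then read off $\mu(C_q(g))=\Sigma(g)/\bigl(N(g)(N(g)-1)\bigr)$ using the already-established value $N(g)=q(q+1)^g$.

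First I would record how distances behave under one corona step $H=G\circ K_q$. For each old vertex $i$ the only edges touching its attached copy $Q_i\cong K_q$ join vertices of $Q_i$ to one another or to $i$, so $i$ is the unique ``gateway'' of $Q_i$ and any walk that enters and leaves $Q_i$ can be shortened through $i$. This yields four facts: (a) the distance between two old vertices is the same in $H$ as in $G$; (b) a vertex of $Q_i$ is at distance $1$ from $i$ and at distance $1+d_G(i,j)$ from any old vertex $j\neq i$; (c) two vertices of the same copy $Q_i$ are at distance $1$; (d) a vertex of $Q_i$ and a vertex of $Q_j$ with $i\neq j$ are at distance $2+d_G(i,j)$.

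Next, summing (a)--(d) over all ordered pairs of the $N(g)+qN(g)$ vertices of $C_q(g+1)$ --- split as old--old, old--new, new--new in the same copy, and new--new in distinct copies, and using $\sum_{i\neq j}d(i,j)=\Sigma(g)$ and $\sum_{i\neq j}1=N(g)(N(g)-1)$ --- I expect to obtain
\begin{equation}
\Sigma(g+1)=(q+1)^2\,\Sigma(g)+q(q+1)\,N(g)+2q(q+1)\,N(g)\bigl(N(g)-1\bigr)\,,\nonumber
\end{equation}
with $\Sigma(0)=q(q-1)$ coming from $C_q(0)=K_q$. Substituting $N(g)=q(q+1)^g$ rewrites the inhomogeneous part as $2q^3(q+1)^{2g+1}-q^2(q+1)^{g+1}$; solving this first-order linear recurrence (homogeneous part $\propto(q+1)^{2g}$; the $(q+1)^{2g+1}$ forcing term is resonant with it and contributes a factor $g$; the $(q+1)^{g+1}$ term gives an ordinary particular solution) produces a closed form for $\Sigma(g)$, and dividing by $N(g)(N(g)-1)$ and simplifying gives~(\ref{muDistance}).

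The main obstacle is the bookkeeping in the recurrence step: one must track the correct multiplicities ($q$ new vertices in each of the $N(g)$ copies, and both orders in mixed old/new pairs) and correctly recognize that it is the $(q+1)^{2g+1}$ forcing term that resonates with the homogeneous solution, which is precisely what produces the $g\,q^2(q+1)^{2g-1}$ contribution in~(\ref{muDistance}); by contrast the distance facts (a)--(d) are elementary and the final recurrence is solved by routine manipulation.
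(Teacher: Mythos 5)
Your proposal is correct and follows essentially the same route as the paper: your cases (a)--(d) are exactly the paper's four pair categories $Z$, $Y$, $W$, $X$, leading to the same recurrence $(q+1)^2\Sigma(g)+q(q+1)N(g)(2N(g)-1)$ (you use ordered pairs, so $\Sigma=2D$, but this is only a notational difference), and your stated recurrence, initial condition, and identification of the resonant forcing term all check out against the closed form in Eq.~(\ref{muDistance}). The only addition beyond the paper is that you explicitly justify the gateway/distance facts, which the paper uses implicitly.
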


\begin{proof}
To begin with, we assume that the summation of distances between all pairs of vertices in $C_q(g)$ is $D(C_q(g))$. The sum of distances between all pairs $(u,v)$ where $u$ belongs to vertex set $U$ and $v$ belongs to a disjoint vertex set $V$, is denoted as $D(U,V)$.

In order to utilize the recursive construction process to the recursive corona graph we classify the vertex pairs in $C_q(g)$ into 4 different categories $W$, $X$, $Y$ and $Z$. The sum of the distances for the 4 categories is denoted as $S_W$, $S_X$, $S_Y$ and $S_Z$, respectively.

Category $W$ refers to the pairs within the same $K_q$ that we add to the network at step $g$. Category $X$ refers to the pairs $u,v$ where $u$ is selected from one of the $N(g-1)$ $K_q$s added at step $g$ and $v$ selected from any other $K_q$ added to the network at the same step. $Y$ refers to the pairs where $u$ is a new vertex and $v$ is a vertex in $C_q(g-1)$. As for category $Z$, it indicates the pairs where both $u$ and $v$ are from the previous generation of graph $C_q(g-1)$.
Thus we have the following equations:
\begin{eqnarray}
D(C_q(g))&=&S_W(g)+S_X(g)+S_Y(g)+S_Z(g)\,,\\
S_W(g)&=&\frac{q(q-1)}{2}\cdot N(g-1)\,,\\
S_X(g)&=&\sum_{i,j\in V(C_q(g-1)) \atop i<j}(d_{i,j}+2)\cdot q^2\nonumber\\
&=&q^2 S_Z(g)+q^2N(g-1)\cdot(N(g-1)-1)\,,\\
S_Y(g)&=&\sum_{i,j\in V(C_q(g-1))}(d_{i,j}+1)\cdot q\nonumber\\
&=&2 q S_Z(g)+q N(g-1)\cdot(N(g-1)-1)+N(g-1)\cdot q\,,\\
S_Z(g)&=&D(C_q(g-1))\,.
\end{eqnarray}
Combining these recursive expression together we have:
\begin{equation}
   D(C_q(g))=(q+1)^2 D(C_q(g-1))+\frac{1}{2} q^2 \left(2 q (q+1)^{g-1}-1\right)
   (q+1)^g\,,
\end{equation}
with the initial condition $D(C_q(0))=q(q-1)/2$ we get the result
\begin{equation}
D(C_q(g))=\frac{1}{2} q \left(2 g q^2 (q+1)^{2 g-1}+(q+1)^g+(q-2)
   \left((q+1)^2\right)^g\right)\,.
\end{equation}
Dividing $D(C_q(g))$ by $\frac{N(g)(N(g)-1)}{2}$ yields Eq.(\ref{muDistance}). For large $g$, we have $\mu(C_q(g))\sim 2g\sim 2\log_q{N(g)}$, which increases logarithmically with the network order.

\end{proof}

\subsection{Clustering coefficient}

Clustering coefficient~\cite{WaSt98} is another crucial quantity used to characterize network structure. Many works about determining clustering coefficient and its related quantities are done on both graph models and graphs in reality~\cite{WaSt98,HoKi02,Ts08,Ts11}.

The clustering coefficient of vertex $v$ is defined as the following quantity
\begin{equation}
\label{c(v)}
c(v)=\frac{2\epsilon_v}{\delta_v(\delta_v-1)}\,,
\end{equation}
where $\epsilon_v$ is the number of edges between the neighbours of vertex $v$. The network clustering coefficient $C(G)$ is defined as the average of $c(v)$ among all vertices. That is,
\begin{equation}
\label{c(G)} c(G)=\frac{1}{|V(G)|}\sum_{v\in V(G)}c(v)\,.
\end{equation}

\begin{theorem}
\label{vClustering}
Let $v(\delta)$  be a vertex in $C_q(g)$ whose degree is $\delta$. Except the initial vertices, its clustering coefficient is
\begin{equation}
c(v(\delta))=\frac{q-1}{\delta-1}\approx\frac{q-1}{\delta}\,.
\end{equation}
\end{theorem}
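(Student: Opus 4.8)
The plan is to read off the block structure that the corona construction forces on the neighbourhood of a non-initial vertex. Fix a vertex $v$ created at some step $g_v\geqslant 1$. By Definition~\ref{recCorDef}, at step $g_v$ the vertex $v$ sits in a freshly added copy of $K_q$ that is attached to a single vertex $u$ of $C_q(g_v-1)$; hence the neighbours of $v$ at that moment are the $q-1$ other vertices of this $K_q$ together with $u$. Since the corona rule joins $u$ to \emph{every} vertex of that copy of $K_q$, these $q$ neighbours induce a complete graph; I will call it the birth block $B_0$ of $v$. At each later step $g'\in\{g_v+1,\dots,g\}$ the corona product attaches to $v$ one more copy of $K_q$, and its $q$ vertices become new, mutually adjacent neighbours of $v$; call these growth blocks $B_1,\dots,B_{g-g_v}$. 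By Eq.~(\ref{degGen}) the neighbourhood of $v$ is precisely $B_0\cup B_1\cup\cdots\cup B_{g-g_v}$, a disjoint union of $g-g_v+1=\delta/q$ sets of size $q$.

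Next I would check that no edge joins two distinct blocks. A vertex lying in a growth block $B_j$, created at step $g_v+j$, is by the corona rule adjacent to exactly one vertex of $C_q(g_v+j-1)$, namely $v$ itself; since $B_0$ and every $B_i$ with $i<j$ are already contained in $C_q(g_v+j-1)$, no vertex of $B_j$ can be adjacent to any vertex of $B_0\cup\cdots\cup B_{j-1}$. This structural bookkeeping is the only step requiring care --- in particular it is where one checks that the parent vertex $u$ is never pulled into a later block's adjacency. It follows that the subgraph induced on the neighbours of $v$ is a disjoint union of $\delta/q$ cliques $K_q$, so the number of edges among the neighbours of $v$ is
\[
\epsilon_v=\frac{\delta}{q}\binom{q}{2}=\frac{\delta(q-1)}{2}\,.
\]

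Finally I would substitute this into the definition~(\ref{c(v)}): with $\delta_v=\delta$,
\[
c(v(\delta))=\frac{2\epsilon_v}{\delta(\delta-1)}=\frac{\delta(q-1)}{\delta(\delta-1)}=\frac{q-1}{\delta-1}\,,
\]
and the claimed approximation $(q-1)/\delta$ is immediate for large $\delta$. The arithmetic here is trivial; the genuine work --- and the only place an argument could go wrong --- is the claim that distinct blocks of $v$'s neighbourhood are pairwise non-adjacent, which rests on the single fact that in every corona step each newly inserted copy of $K_q$ is attached to exactly one pre-existing vertex.
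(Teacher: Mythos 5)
Your proposal is correct and follows essentially the same route as the paper: the neighbourhood of a non-initial vertex of degree $\delta=kq$ splits into $k$ pairwise non-adjacent copies of $K_q$, giving $\epsilon_v=kq(q-1)/2$ and hence $c(v)=\frac{q-1}{kq-1}=\frac{q-1}{\delta-1}$. Your explicit verification that no edges join distinct blocks (and that the parent vertex $u$ completes the birth block into a $K_q$) is simply a more careful spelling-out of what the paper asserts in one line.
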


\begin{proof}
Let us review the intermediate result in calculating the degree distribution that the number of vertices with degree $q,2q,\cdots,g q,(g+1)q-1$ are, respectively, $q^2 (q+1)^{g-1},q^2 (q+1)^{g-2},\cdots,q^2,q$. Except the initial vertices, the clustering coefficient of other vertices follow the same rule, that is, a vertex with degree $k\cdot q$ has $k\cdot q$ neighbours, which are evenly distributed in $k$ clusters. Each cluster forms a complete graph $K_q$. Thus the clustering coefficient of vertex $v$ is derived as
\begin{equation}
c(v)=\frac{k q(q-1)/2}{kq(kq-1)/2}=\frac{q-1}{kq-1},\,(g_v>0)\,,
\end{equation}
and for initial vertices
\begin{equation}
\label{clusterInter}
c(v)=\frac{(q-1)(q-2)+gq(q-1)}{(q(g+1)-1)(q(g+1)-2)},\,(g_v=0)\,.
\end{equation}
Theorem~\ref{vClustering} is naturally gained.
\end{proof}

\begin{theorem}
\label{gClustering}
The clustering coefficient of RCG network $C_q(g)$ converge to
\begin{equation}
c(C_q(g))\sim \frac{q-1}{q+1} \Phi
   \left(\frac{1}{q+1},1,\frac{q-1}{q}\right)\,,
\end{equation}
when the network order is high enough. In the expression, $\Phi$ is the Lerch transcendent function. For large $q$, the clustering coefficient tends to $1$.
\end{theorem}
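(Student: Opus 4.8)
The plan is to evaluate $c(C_q(g))$ directly from its definition~(\ref{c(G)}) as the mean of the vertex clustering coefficients, using the fact that Theorem~\ref{vClustering} and the degree-count data recorded in its proof tell us exactly how many vertices carry each value of $c(v)$. Recall that, apart from the $q$ initial vertices, the graph $C_q(g)$ has precisely $q^2(q+1)^{g-k}$ vertices of degree $kq$ for each $k=1,\dots,g$, and every such vertex has $c(v)=\frac{q-1}{kq-1}$, while the $q$ initial vertices contribute the value~(\ref{clusterInter}), which is $O(1/g)$. Substituting into~(\ref{c(G)}) with $N(g)=q(q+1)^g$ I would write
\[
c(C_q(g))=\frac{1}{q(q+1)^g}\left(q\,c_0(g)+\sum_{k=1}^{g}q^2(q+1)^{g-k}\,\frac{q-1}{kq-1}\right),
\]
where $c_0(g)$ denotes the initial-vertex value. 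The first term is $O\big((q+1)^{-g}/g\big)$, hence negligible as $g\to\infty$, and the factor $(q+1)^g$ cancels in the remaining sum, leaving
\[
c(C_q(g))=q(q-1)\sum_{k=1}^{g}\frac{(q+1)^{-k}}{kq-1}+o(1).
\]

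Next I would let $g\to\infty$; since the series converges geometrically this is harmless, and the task reduces to identifying $\sum_{k\ge1}(q+1)^{-k}/(kq-1)$ with a value of the Lerch transcendent $\Phi(z,s,a)=\sum_{n\ge0}z^n/(n+a)^s$. Writing $kq-1=q\big(k-\tfrac1q\big)$ and shifting the summation index by $n=k-1$ gives
\[
\sum_{k=1}^{\infty}\frac{(q+1)^{-k}}{kq-1}=\frac{1}{q(q+1)}\sum_{n=0}^{\infty}\frac{\big(\tfrac{1}{q+1}\big)^{n}}{n+\tfrac{q-1}{q}}=\frac{1}{q(q+1)}\,\Phi\!\left(\frac{1}{q+1},1,\frac{q-1}{q}\right),
\]
and multiplying through by the prefactor $q(q-1)$ produces exactly $\frac{q-1}{q+1}\,\Phi\big(\tfrac{1}{q+1},1,\tfrac{q-1}{q}\big)$, the claimed limit.

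For the large-$q$ statement I would isolate the $n=0$ term of the Lerch series, which equals $\big(\tfrac{q-1}{q}\big)^{-1}=\tfrac{q}{q-1}$, and bound the tail $\sum_{n\ge1}(q+1)^{-n}/(n+\tfrac{q-1}{q})$ above by the geometric sum $\sum_{n\ge1}(q+1)^{-n}=1/q$; hence $\Phi\big(\tfrac{1}{q+1},1,\tfrac{q-1}{q}\big)=\tfrac{q}{q-1}+O(1/q)$ and $c(C_q(g))\to\frac{q-1}{q+1}\cdot\frac{q}{q-1}=\frac{q}{q+1}$, which tends to $1$. I do not expect a genuine obstacle: the only points that need care are checking that the initial-vertex contribution and the series tail are really lower order and that the index shift is carried out cleanly so the arguments of $\Phi$ come out as stated; the rest is the same geometric cancellation already seen in the degree-distribution computation.
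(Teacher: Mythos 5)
Your proposal is correct and follows essentially the same route as the paper: both sum the per-class values $\frac{q-1}{kq-1}$ weighted by the $q^2(q+1)^{g-k}$ vertices of degree $kq$ (plus the negligible initial-vertex term) and identify the resulting geometric-type series with $\frac{q-1}{q+1}\,\Phi\bigl(\frac{1}{q+1},1,\frac{q-1}{q}\bigr)$; the only difference is that the paper records the exact finite-$g$ expression with its Lerch correction terms before letting $g\to\infty$, whereas you pass directly to the limiting series. Your explicit $n=0$-term estimate for the large-$q$ claim is in fact more careful than the paper's bare assertion.
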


\begin{proof}
From Eq.(\ref{c(G)}) and (\ref{clusterInter}) we can obtain the clustering coefficient of $C_q(g)$
\begin{small}
\begin{eqnarray}
c(C_q(g))&=&\frac{1}{N(g)}\left[\sum_{k=1}^g {\frac{q-1}{kq-1}\cdot q^2(q-1)^{g-k}}+\frac{(q-1)(gq+q-2)}{(q(g+1)-1)(q(g+1)-2)}\right]\nonumber\\
&=&\frac{q-1}{q+1}
\left(
\Phi\Bigg(\frac{1}{q+1},1,\frac{q-1}{q}\right)\\ \nonumber
& &-\left(\frac{1}{q+1}\right)^g \Phi
   \left(\frac{1}{q+1},1,\frac{qg+q-1}{q}\right)+\frac{(q+1)^{-g+1}
   }{g q+q-1}\Bigg)\,.
\end{eqnarray}
\end{small}
When $g$ is large enough, we obtain
\begin{equation}
c(C_q(g))\sim \frac{q-1}{q+1} \Phi
   \left(\frac{1}{q+1},1,\frac{q-1}{q}\right)\,,
\end{equation}
which is high and tends to $1$ when $q$ is also large. Figure~\ref{clusterCoef} gives the  clustering coefficient of some networks while they grow.
\end{proof}

\begin{figure}[htbp]
\begin{center}
\includegraphics[width=.85\linewidth]{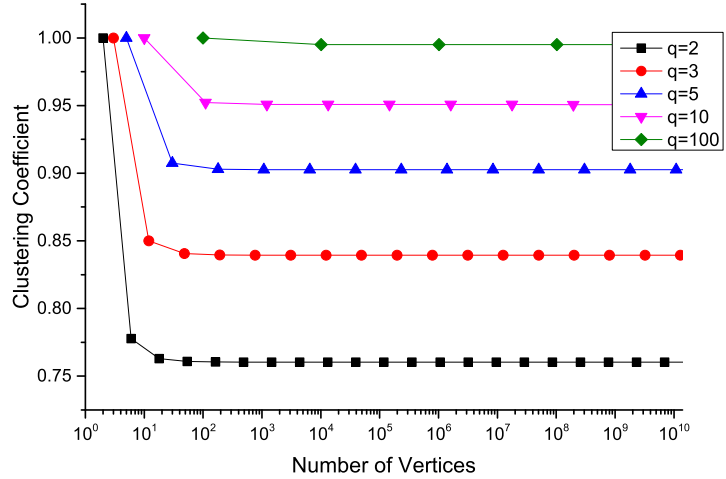}
\end{center}
\caption[kurzform]{Clustering coefficient of some recursive corona graphs.}\label{clusterCoef}
\end{figure}

%
\subsection{Spanning trees}

Next we derive the number of spanning trees in graph $C_q(g)$.

\begin{theorem}
\label{spanningT}
The number of spanning tree of $C_q(g)$ is
\begin{equation}\label{spanningTE1}
N_\textrm{\rm{tr}}(C_q(g))=q^{q-2} (q+1)^{1-q} \left((q+1)^{q-1}\right)^{(q+1)^g}\,.
\end{equation}
\end{theorem}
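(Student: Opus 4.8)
The plan is to exploit the recursive corona structure together with the fact that each copy of $K_q$ attached at step $g$ is glued to the rest of the graph along a single vertex (the corresponding vertex of $C_q(g-1)$). This ``pendant clique'' structure is ideal for a deletion–contraction / Matrix-Tree argument. Concretely, write $T(H)$ for the number of spanning trees of a graph $H$. I would first establish a local lemma: if $H$ is obtained from a connected graph $H_0$ by picking a vertex $u$ and attaching a fresh copy of $K_q$ to $u$ via $q$ new edges (so $u$ together with the $q$ new vertices spans a $K_{q+1}$ whose only contact with $H_0$ is the vertex $u$), then $T(H)=T(K_{q+1})\cdot T(H_0)=(q+1)^{q-1}\,T(H_0)$. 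This follows because any spanning tree of $H$ restricts to a spanning tree of $H_0$ and, independently, to a spanning tree of the $K_{q+1}$ block (the block and $H_0$ share only the cut vertex $u$, so the tree decomposes across that cut vertex); the count $T(K_{q+1})=(q+1)^{(q+1)-2}=(q+1)^{q-1}$ is Cayley's formula.

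Next I would iterate. The graph $C_q(g)$ is built from $C_q(g-1)$ by attaching, to \emph{each} of its $N(g-1)=q(q+1)^{g-1}$ vertices, one such pendant $K_q$ block. Applying the local lemma once per attached block — the blocks are vertex-disjoint apart from their distinct anchor vertices, so the factorization across cut vertices applies simultaneously — gives the recursion
\begin{equation}\label{spanTreeRec}
T(C_q(g))=\left((q+1)^{q-1}\right)^{N(g-1)}T(C_q(g-1))=\left((q+1)^{q-1}\right)^{q(q+1)^{g-1}}T(C_q(g-1)).
\end{equation}
With the initial value $T(C_q(0))=T(K_q)=q^{q-2}$ (Cayley again), unrolling \eqref{spanTreeRec} yields
\begin{equation}
T(C_q(g))=q^{q-2}\left((q+1)^{q-1}\right)^{\sum_{i=0}^{g-1}q(q+1)^{i}}=q^{q-2}\left((q+1)^{q-1}\right)^{(q+1)^g-1},
\end{equation}
using $\sum_{i=0}^{g-1}q(q+1)^i=(q+1)^g-1$. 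Rewriting $\left((q+1)^{q-1}\right)^{(q+1)^g-1}=(q+1)^{-(q-1)}\left((q+1)^{q-1}\right)^{(q+1)^g}$ gives exactly \eqref{spanningTE1}.

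I expect the main obstacle to be a clean justification of the local lemma and its simultaneous application to all $N(g-1)$ blocks — i.e.\ the claim that the spanning-tree count multiplies over the ``block-cut'' structure when many cliques hang off distinct cut vertices of the same core graph. The cleanest route is probably to invoke the standard fact that $T(H)$ is multiplicative over the blocks (2-connected components) of $H$: $T(H)=\prod_{B} T(B)$ where $B$ ranges over the blocks. In $C_q(g)$ every attached $K_q$ together with its anchor forms a $K_{q+1}$ block, and the ``core'' blocks are precisely the blocks of $C_q(g-1)$; hence $T(C_q(g))=\big(\prod_{\text{new blocks}}T(K_{q+1})\big)\cdot\prod_{\text{old blocks}}T(B)=\left((q+1)^{q-1}\right)^{N(g-1)}T(C_q(g-1))$, and the recursion follows without any explicit Laplacian computation. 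An alternative, if one prefers to stay matrix-theoretic, is to compute $\det(L'+ \tfrac1n J)$ or a cofactor of the Laplacian $L(C_q(g))$ directly, exploiting the block-arrow structure of $L$ induced by the corona, and perform a Schur-complement reduction that eliminates the $N(g-1)$ pendant $K_q$ blocks; this reproduces the same recursion but requires more bookkeeping, so I would present the block-multiplicativity proof as the primary argument.
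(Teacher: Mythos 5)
Your proposal is correct and follows essentially the same route as the paper's primary proof: Cayley's formula, the observation that each attached $K_q$ together with its anchor forms a $K_{q+1}$ block, the recursion $N_\textrm{tr}(C_q(g))=\left((q+1)^{q-1}\right)^{N(g-1)}N_\textrm{tr}(C_q(g-1))$, and the same unrolling; your explicit appeal to block-multiplicativity simply makes rigorous a step the paper states tersely. (The paper also gives a second, spectral proof via the product of nonzero Laplacian eigenvalues, closer to the Schur-complement alternative you mention but did not pursue.)
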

\begin{proof}
According to the Cayley's formula~\cite{Ca89}, the number of spanning trees of a complete graph $K_q$ is equal to $q^{q-2}$. Since all vertices of a $K_q$ added to the graph are connected to a vertex in the original graph, these $q+1$ vertices consist of a new complete graph $K_{q+1}$. Therefore we have the following recursive relation of the spanning trees of $C_q(g)$:
\begin{equation}
N_\textrm{\rm{tr}}(C_q(g))=\left((q+1)^{q-1}\right)^{N(g-1)}N_\textrm{tr}(g-1)\,.
\end{equation}
Together with the initial condition $N_{tr}=q^{q-2}$, we can derive the expression for $N_{tr}$;
\begin{equation}
N_\textrm{\rm{tr}}(C_q(g))=q^{q-2} (q+1)^{1-q} \left((q+1)^{q-1}\right)^{(q+1)^g}\,.
\end{equation}
\end{proof}

\subsection{Kirchhoff index}
Resistance distance is an important character of a graph, which can imply many of its dynamic properties. The Kirchhoff index~\cite{KlRa93} of a graph refers to the sum of resistance between all vertex pairs in an associated electrical network obtained from the graph by replacing each edge of the graph by a unit resistance. Denote the effective resistance between vertices $i$ and $j$ as $r(i,j)$ or $r_{i,j}$, then the Kirchhoff index $R_\textrm{Kr}$ of graph $G$ is defined as
\begin{equation}
R_\textrm{Kr}(G)=\sum_{i,j\in V(G) \atop{i<j}} r_{i,j}\,.
\end{equation}
 We denote the Kirchhoff index of graph $C_q(g)$ by $R_\textrm{Kr}(C_q(g))$.

\begin{theorem}
\label{kirchhoff}
The Kirchhoff index of $C_q(g)$ is
\begin{equation}
R_\textrm{\rm{Kr}}(C_q(g))=\left(q^3(2g+1)-2q-1\right)(q+1)^{2g-2}+q(q+1)^{g-1}\,.
\end{equation}
\end{theorem}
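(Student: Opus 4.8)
The plan is to mirror the recursive strategy used for the average distance, but now for effective resistances, exploiting the self-similar construction $C_q(g)=C_q(g-1)\circ K_q$. The key structural fact is the same one used in the spanning-tree count: each $i$-th vertex of $C_q(g-1)$ together with the $i$-th attached copy of $K_q$ forms a $K_{q+1}$. In $K_{q+1}$ with unit resistances, the effective resistance between any two of its vertices is $\frac{2}{q+1}$; this is the local ingredient I will need repeatedly. I would classify the vertex pairs of $C_q(g)$ into the same four categories $W,X,Y,Z$ as in the average-distance proof: $W$ = pairs inside one newly added $K_q$; $X$ = pairs in two distinct new $K_q$'s; $Y$ = one new vertex and one old vertex in $C_q(g-1)$; $Z$ = both vertices old. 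Writing $R_{\rm Kr}(C_q(g))=R_W+R_X+R_Y+R_Z$, the term $R_Z$ is \emph{not} simply $R_{\rm Kr}(C_q(g-1))$, because adding the pendant $K_q$'s changes effective resistances among the old vertices — this is the main subtlety and the principal obstacle.

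To handle $R_Z$, I would argue that attaching a copy of $K_q$ to a single vertex $u$ of $C_q(g-1)$ does not change any effective resistance between two vertices $u',v'$ of $C_q(g-1)$: the new gadget is a ``dangling'' attachment (the only connection to the rest of the graph is through $u$), so by the standard fact that pendant subgraphs carry no current between external nodes, resistances within $C_q(g-1)$ are preserved. Hence $R_Z(g)=R_{\rm Kr}(C_q(g-1))$. Next, for a fixed new vertex $w$ in the copy attached at $u$ and an old vertex $v'$, one has $r(w,v')=r(u,v')+\frac{2}{q+1}$ (series-type addition through the $K_{q+1}$-gadget, since all current from $w$ to $v'$ must pass through $u$, and $r(w,u)=\tfrac{2}{q+1}$). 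Summing over the $q$ choices of $w$ in each copy and over all pairs $(u,v')$ gives
\begin{equation}
R_Y(g)=q\sum_{u\ne v'\in V(C_q(g-1))}\!\! r(u,v') \;+\; q\,N(g-1)\bigl(N(g-1)-1\bigr)\frac{2}{q+1}\;+\;N(g-1)\cdot\frac{2}{q+1},
\end{equation}
where the middle term counts ordered old pairs (the diagonal $u=v'$ needs separate bookkeeping, giving the last term) and $\sum_{u\ne v'} r(u,v')=2R_{\rm Kr}(C_q(g-1))$. Similarly, for $w$ in the copy at $u$ and $w''$ in the copy at a distinct old vertex $u'$, $r(w,w'')=\frac{2}{q+1}+r(u,u')+\frac{2}{q+1}$, yielding $R_X(g)=q^2\bigl(2R_{\rm Kr}(C_q(g-1))+\tfrac{4}{q+1}\binom{N(g-1)}{2}\bigr)$-type expression, and $R_W(g)=N(g-1)\cdot\binom{q}{2}\cdot\frac{2}{q+1}$ since each new $K_q$ sits inside a $K_{q+1}$.

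Collecting these, I expect a linear recursion of the shape $R_{\rm Kr}(C_q(g))=(q+1)^2 R_{\rm Kr}(C_q(g-1))+f(g)$ for an explicit polynomial-times-exponential forcing term $f(g)$ built from $N(g-1)=q(q+1)^{g-1}$; the $(q+1)^2$ factor arises exactly as in the distance recursion from the three ``doubling'' contributions $2R_{\rm Kr}$ in $R_X$, $R_Y$, $R_Z$. Solving this first-order linear recursion with initial condition $R_{\rm Kr}(C_q(0))=R_{\rm Kr}(K_q)=\binom{q}{2}\cdot\frac{2}{q}=q-1$ (standard homogeneous-plus-particular solution, the particular part matching the $g\,(q+1)^{2g}$ and $(q+1)^g$ terms in $f$) should produce the closed form $\bigl(q^3(2g+1)-2q-1\bigr)(q+1)^{2g-2}+q(q+1)^{g-1}$. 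The routine-but-delicate parts are the exact constants in $f(g)$ and verifying the initial condition against the formula at $g=0$; the conceptual crux is the pendant-subgraph invariance argument that pins down $R_Z$ and the series-addition identities for $R_X$ and $R_Y$.
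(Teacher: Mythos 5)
Your plan coincides with the paper's own (first) proof of this theorem: the same $W,X,Y,Z$ classification of vertex pairs, the observation that each new copy of $K_q$ together with its anchor vertex forms a $K_{q+1}$ with pairwise effective resistance $\frac{2}{q+1}$, the identity $R_Z(g)=R_\textrm{Kr}(C_q(g-1))$, the resulting linear recursion $R_\textrm{Kr}(C_q(g))=(q+1)^2R_\textrm{Kr}(C_q(g-1))+f(g)$ with initial value $R_\textrm{Kr}(C_q(0))=q-1$ — and your pendant-gadget invariance and cut-vertex series arguments are precisely the justifications the paper leaves implicit. Only the two sketched constants need the care you already promised: over unordered old pairs $R_X$ carries $q^2R_\textrm{Kr}(C_q(g-1))$ (not $2q^2R_\textrm{Kr}$), and the diagonal term of $R_Y$ is $q\,N(g-1)\cdot\frac{2}{q+1}$ (one term for each of the $q$ new vertices per copy), after which the coefficient $1+2q+q^2=(q+1)^2$ comes out exactly as you state.
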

\begin{proof}
We denote by $r(U,V)$ as the sum of all effective resistance between pairs $(u,v)$ in which $u$ and $v$ belong to two disjoint vertex set $U$ and $V$ respectively. Similar to the method we used in calculating the average distance, we classify these pairs into $4$ categories $W$, $X$, $Y$ and $Z$, where the definition is exactly the same as used in calculating the average distance. Then the sum of the distances for the four categories is denoted as $R_W$, $R_X$, $R_Y$ and $R_Z$. We have the following equations:
\begin{small}
\begin{eqnarray}
R_\textrm{Kr}(C_q(g))&=&R_W(g)+R_X(g)+R_Y(g)+R_Z(g)\,,\\
R_W(g)&=&\frac{q(q-1)}{2}\cdot\frac{2}{q+1}\cdot N(g-1)\,,\\
R_X(g)&=&\sum_{i,j\in V(C_q(g-1)) \atop i<j}(r_{i,j}+2\cdot\frac{2}{q+1})\cdot q^2\nonumber\\
&=&q^2 R_Z(g)+\frac{2\cdot q^2N(g-1)\cdot(N(g-1)-1)}{q+1}\,,\\
R_Y(g)&=&\sum_{i,j\in V(C_q(g-1))}(r_{i,j}+\frac{2}{q+1})\cdot q\nonumber\\
&=&2 q R_Z(g)+q\cdot\frac{2}{q+1} N(g-1)\cdot(N(g-1)-1)\nonumber\\
& &+q\cdot\frac{2}{q+1}\cdot N(g-1)\,,\\
R_Z(g)&=&R_\textrm{Kr}(C_q(g-1))\,,
\end{eqnarray}
\end{small}
which yields
\begin{equation}
R_\textrm{Kr}(C_q(g+1))=q^2 \left(2 q (q+1)^g-1\right) (q+1)^g+(q+1)^2 R_\textrm{Kr}(C_q(g))\,.
\end{equation}
Notice that the effective resistance between vertices $v$ and $u$ in a complete graph $K_q$ is $2/q$ since the potential between any other vertices are identical, if we impose potential difference between $u$ and $v$.

Along with the initial condition $R_k(C_q(0))=q-1$ we can deduce
\begin{eqnarray}
R_\textrm{Kr}(C_q(g))&=&\left(q^3-2 q-1\right) (q+1)^{2g-2}+q \left(2 q^2 g (q+1)^g+q+1\right)
   (q+1)^{g-2}\nonumber\\
   &=&(q^3-2q-1+2gq^3)(q+1)^{2g-2}+(q+1)^{g-1}q\nonumber\\
   &=&\left(q^3(2g+1)-2q-1\right)(q+1)^{2g-2}+q(q+1)^{g-1}\,.
\end{eqnarray}
This completes the proof.
\end{proof}
\section{Spectral analysis}
By convention the (unweighted) adjacency matrix $A(G)$ of a graph $G$ is defined as a $N\times N$ matrix with the entry $a_{i,j}$ representing the number of edges incident with endpoints ${i,j}$. The degree matrix  of $G$, $D(G)$ is defined as a diagonal matrix with its $i$th entry on the main diagonal equal to the degree of vertex $i$. We call $L(G)=D(G)-A(G)$ the Laplacian matrix of graph $G$. These matrices determine the structure of the graph, and the eigenvalues of $A(G)$ and $L(G)$ are sensitive to many of the structural properties, which have remarkable impact on the dynamic processes superimposed upon the network.

\begin{definition}
Given $A(C_q(g))$ the adjacency matrix of $C_q(g)$, we define the \emph{spectra} of $C_q(g)$ as
\begin{equation}
\sigma(C_q(g)):=\sigma(g)=(\lambda_1(g),\lambda_2(g)\cdots\lambda_n(g))\,.
\end{equation}
\end{definition}
Similarly, we have:
\begin{definition}
Given $L(C_q(g))$ as the Laplacian matrix of $C_q(g)$, we define its \emph{Laplacian spectra}
\begin{equation}
S(C_q(g)):=S(g)=(\gamma_1(g),\gamma_2(g),\cdots,\gamma_n(g))\,.
\end{equation}
\end{definition}

\subsection{Spectra of Adjacency Matrix}
\begin{theorem}
\label{eigAdj}
The relation between $\sigma(g)$ and $\sigma(g-1)$ is
\begin{enumerate}
\item $\frac{\lambda_i(g-1)+q-1\pm \sqrt{(q-1-\lambda_i(g-1))^2+4q}}{2}\in \sigma(g)$ with multiplicity $1$ for $i=1,\cdots,q(q+1)^{g-1}$ and
\item $-1 \in \sigma(g)$ with multiplicity $(q-1)q(q+1)^{g-1}$.
\end{enumerate}
\end{theorem}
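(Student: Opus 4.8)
The plan is to read the eigenvalues off the block structure that the corona product forces on the adjacency matrix, and to diagonalise it directly. I would order the vertices of $C_q(g)=C_q(g-1)\circ K_q$ so that the first $N_1:=N(g-1)=q(q+1)^{g-1}$ of them form the embedded copy of $C_q(g-1)$, with adjacency matrix $A_1:=A(C_q(g-1))$, and the remaining $qN_1$ vertices are partitioned into $N_1$ groups of $q$, the $i$-th group being the copy of $K_q$ attached to vertex $i$. Since $A(K_q)=J-I$, where $J$ denotes the $q\times q$ all-ones matrix, the construction gives
\begin{equation}
A(C_q(g))=\begin{pmatrix} A_1 & I_{N_1}\otimes\mathbf{1}^{\mathrm T}\\ I_{N_1}\otimes\mathbf{1} & I_{N_1}\otimes(J-I)\end{pmatrix},
\end{equation}
where $\mathbf{1}$ is the all-ones column vector of length $q$. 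As this matrix is real symmetric, it will be enough to exhibit $N(g)=q(q+1)^g$ linearly independent eigenvectors together with their eigenvalues.

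I would write a candidate eigenvector as $(\mathbf{x};\mathbf{y}_1,\dots,\mathbf{y}_{N_1})$, with $\mathbf{x}$ of length $N_1$ and each $\mathbf{y}_i$ of length $q$, and decompose $\mathbf{y}_i=y_i\mathbf{1}+\mathbf{z}_i$ with $\mathbf{z}_i\perp\mathbf{1}$. Because $J\mathbf{1}=q\mathbf{1}$, $J\mathbf{z}_i=0$ and $\mathbf{1}^{\mathrm T}\mathbf{z}_i=0$, the eigenvalue equations split into a ``$\mathbf{1}$-channel'' relating $\mathbf{x}$ to the scalars $y_i$, and a ``$\mathbf{1}^{\perp}$-channel'' involving only the $\mathbf{z}_i$. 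In the $\mathbf{1}^{\perp}$-channel, taking $\mathbf{x}=0$ and all $y_i=0$ makes the block equations collapse to $(J-I)\mathbf{z}_i=\lambda\mathbf{z}_i$, which forces $\lambda=-1$; choosing for each $i$ a basis of the $(q-1)$-dimensional space $\mathbf{1}^{\perp}$ then produces $(q-1)N_1=(q-1)q(q+1)^{g-1}$ linearly independent eigenvectors with eigenvalue $-1$, which is item~2.

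For the $\mathbf{1}$-channel I would set all $\mathbf{z}_i=0$, let $\mathbf{u}$ be an eigenvector of $A_1$ with $A_1\mathbf{u}=\mu\mathbf{u}$ (so $\mu=\lambda_i(g-1)$), and take $\mathbf{y}_i=u_i\mathbf{1}$ and $\mathbf{x}=(\lambda-q+1)\mathbf{u}$. The rows attached to the $K_q$-copies then hold automatically, while those attached to $C_q(g-1)$ reduce to $(\lambda-q+1)A_1\mathbf{u}+q\mathbf{u}=\lambda(\lambda-q+1)\mathbf{u}$, i.e. to the scalar condition $\lambda^2-(\mu+q-1)\lambda+\mu(q-1)-q=0$. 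Its two roots are $\frac{1}{2}\bigl(\mu+q-1\pm\sqrt{(\mu+q-1)^2-4\mu(q-1)+4q}\,\bigr)$, and simplifying the radicand to $(q-1-\mu)^2+4q$ reproduces exactly the expression in item~1; since $q\ge 2$ the discriminant is strictly positive, so the two roots are always real and distinct.

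What then remains --- and this is the step I expect to be the main obstacle --- is to verify that the $2N_1$ vectors from the $\mathbf{1}$-channel, together with the $(q-1)N_1$ from the $\mathbf{1}^{\perp}$-channel ($(q+1)N_1=N(g)$ in all), are linearly independent. The two families should be independent because the $\mathbf{1}^{\perp}$-channel vectors have every $\mathbf{y}_i$ orthogonal to $\mathbf{1}$ while the $\mathbf{1}$-channel vectors have every $\mathbf{y}_i$ parallel to $\mathbf{1}$; within the $\mathbf{1}$-channel, independence should follow from that of the eigenvectors $\mathbf{u}$ of $A_1$ together with the fact that, for each fixed $\mu$, the two values of $\lambda-q+1$ differ (the discriminant never vanishes). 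I would also check that $\lambda=q-1$ is never a root of the quadratic --- substituting it leaves $-q\ne 0$ --- so the parametrisation $\mathbf{x}=(\lambda-q+1)\mathbf{u}$ never degenerates, and that the possible coincidence $\lambda=-1$ arising from $\mu=0$ (should $0\in\sigma(g-1)$) causes no clash, the two eigenvectors involved again lying in the complementary subspaces $\mathbf{y}_i\parallel\mathbf{1}$ and $\mathbf{y}_i\perp\mathbf{1}$. A clean way to avoid this independence bookkeeping altogether is to compute $\det(\lambda I-A(C_q(g)))$ by the Schur complement of the invertible block $I_{N_1}\otimes(\lambda I-(J-I))$: using $\mathbf{1}^{\mathrm T}(\lambda I-(J-I))^{-1}\mathbf{1}=q/(\lambda-q+1)$ one obtains, as an identity of polynomials, $\det(\lambda I-A(C_q(g)))=(\lambda+1)^{(q-1)N_1}\prod_{i=1}^{N_1}\bigl(\lambda^2-(\lambda_i(g-1)+q-1)\lambda+\lambda_i(g-1)(q-1)-q\bigr)$, from which both items are immediate.
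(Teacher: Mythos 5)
Your proof is correct, and it is more self-contained than what the paper does: for this theorem the paper offers no argument at all, merely observing that the result follows from the corona-spectrum literature (Barik--Pati--Sarma and Sharma--Adhikari--Mishra), and it reserves the explicit eigenvector computation for the Laplacian analogue (Theorem~\ref{theoremLap}). Your ``$\mathbf{1}$-channel / $\mathbf{1}^{\perp}$-channel'' decomposition is essentially the same ansatz the paper uses there (eigenvectors of the old graph propagated into the copies with a scalar factor, plus vectors supported on the copies and orthogonal to $\mathbf{1}$), and your quadratic $\lambda^2-(\mu+q-1)\lambda+\mu(q-1)-q=0$ with radicand $(q-1-\mu)^2+4q$ reproduces item~1 exactly, while the $(J-I)\mathbf{z}=-\mathbf{z}$ vectors give item~2 with the stated multiplicity $(q-1)q(q+1)^{g-1}$. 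Where your write-up adds genuine value is the closing Schur-complement computation: the identity $\det(\lambda I-A(C_q(g)))=(\lambda+1)^{(q-1)N_1}\prod_i\bigl((\lambda-\lambda_i(g-1))(\lambda-q+1)-q\bigr)$ certifies that the exhibited eigenvalues exhaust the spectrum with the correct multiplicities, a completeness step that the eigenvector-listing approach (including the paper's Laplacian proof) leaves implicit; your remarks that $\lambda=q-1$ is never a root and that $\mu=0$ would merge an extra $-1$ into the second family (the same caveat the paper makes for $q+1$ in the Laplacian case) are exactly the right points to flag.
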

The result is a corollary of results in~\cite{BaPaSa07,ShAdMi15}.

\subsection{Spectra of Laplacian Matrix}
\begin{theorem}\label{theoremLap}
The relation between $S(g)$ and $S(g-1)$ is
\begin{enumerate}
\item $\frac{\gamma_i(g-1)+q+1\pm \sqrt{(\gamma_i(g-1)+q+1)^2-4\gamma_i(g-1)}}{2}\in S(g)$ with multiplicity $1$ for $i=1,\cdots,q(q+1)^{g-1}$ and
\item $q+1\in S(g)$ with extra multiplicity $(q-1)q(q+1)^{g-1}$.
\end{enumerate}
\end{theorem}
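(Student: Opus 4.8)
The plan is to write the Laplacian of $C_q(g)=C_q(g-1)\circ K_q$ in block form and diagonalize it ``one block at a time'', exploiting the Kronecker structure of the part indexed by the attached cliques. Set $n=N(g-1)=q(q+1)^{g-1}$ and $L:=L(C_q(g-1))$. Order the $N(g)=(q+1)n$ vertices of $C_q(g)$ so that the first $n$ are those of the embedded copy of $C_q(g-1)$ and the remaining $qn$ are indexed by $\mathbb{R}^q\otimes\mathbb{R}^n$, with $e_u\otimes e_i$ the $u$-th vertex of the $i$-th attached $K_q$. Each old vertex gains exactly $q$ new neighbours and each new vertex has degree $(q-1)+1=q$, and since $A(K_q)=J_q-I_q$ (with $J_q$ the all-ones matrix) one obtains
\begin{equation*}
L(C_q(g))=\begin{pmatrix} L+qI_n & -\mathbf{1}_q^{\mathrm{T}}\otimes I_n\\[2pt] -\mathbf{1}_q\otimes I_n & \bigl((q+1)I_q-J_q\bigr)\otimes I_n\end{pmatrix}.
\end{equation*}

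First I would handle the subspace transverse to the attached cliques. If $u\in\mathbb{R}^q$ with $\mathbf{1}_q^{\mathrm{T}}u=0$, then $J_qu=0$, so for every $z\in\mathbb{R}^n$ the vector $(0;\,u\otimes z)$ is an eigenvector with eigenvalue $q+1$, the off-diagonal coupling $\mathbf{1}_q^{\mathrm{T}}u$ being zero. Letting $u$ run over a basis of $\mathbf{1}_q^{\perp}$ and $z$ over a basis of $\mathbb{R}^n$ yields $(q-1)n=(q-1)q(q+1)^{g-1}$ independent eigenvectors for $q+1$, which is item~2.

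Next, on the complementary subspace I would seek eigenvectors of the form $(x;\,\mathbf{1}_q\otimes z)$ with $x=\varphi_i$ an eigenvector of $L$, $L\varphi_i=\gamma_i(g-1)\varphi_i$, and $z=c\varphi_i$ for a scalar $c$. Using $J_q\mathbf{1}_q=q\mathbf{1}_q$ the eigenvalue equation reduces to the scalar system $\gamma_i(g-1)+q-qc=\gamma$ and $(1-\gamma)c=1$; eliminating $c$ gives the quadratic
\begin{equation*}
\gamma^2-\bigl(\gamma_i(g-1)+q+1\bigr)\gamma+\gamma_i(g-1)=0,
\end{equation*}
whose two roots are exactly the values in item~1. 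I would note that the discriminant $\bigl(\gamma_i(g-1)+q+1\bigr)^2-4\gamma_i(g-1)$, seen as a quadratic in $\gamma_i(g-1)$, has its own discriminant $-16q<0$, hence is strictly positive; so the two roots are always real and distinct, the two values of $c$ differ, and each $\varphi_i$ contributes two independent eigenvectors --- $2n$ in total. (When $\gamma_i(g-1)=0$ the roots are $0$ and $q+1$, which is why item~2 speaks of \emph{extra} multiplicity.)

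To conclude I would count and invoke symmetry: the $2n$ vectors from item~1 lie in $\mathbb{R}^n\oplus(\mathbf{1}_q\otimes\mathbb{R}^n)$, the $(q-1)n$ vectors from item~2 lie in the orthogonal complement $\{0\}\oplus(\mathbf{1}_q^{\perp}\otimes\mathbb{R}^n)$, so together they form $2n+(q-1)n=(q+1)n=q(q+1)^g=N(g)$ linearly independent eigenvectors of the symmetric matrix $L(C_q(g))$, and therefore exhaust its spectrum with the claimed multiplicities. I expect the main obstacle to be only bookkeeping: deriving the block Laplacian correctly (the $+qI_n$ degree correction on the old vertices and the incidence block $\mathbf{1}_q^{\mathrm{T}}\otimes I_n$), and then verifying completeness and the multiplicity count, in particular that the ``accidental'' roots $q+1$ and $0$ of the quadratic are consistent with the statement; the $2\times2$ reduction itself is routine algebra.
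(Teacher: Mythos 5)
Your proposal is correct and follows essentially the same route as the paper: the same block form of $L(C_q(g))$, eigenvectors $(\varphi_i;\,c\,\mathbf{1}_q\otimes\varphi_i)$ reducing to the same $2\times 2$ system and quadratic for item~1, and vectors supported on $\mathbf{1}_q^{\perp}$ within the attached cliques for item~2. Your added checks (positivity of the discriminant and the count $2n+(q-1)n=N(g)$ showing the spectrum is exhausted) only make explicit what the paper leaves implicit.
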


Note that in the first part $\gamma_i(g-1)=0$ will generate an eigenvalue equal to $q+1$ with multiplicity $1$ in iteration $g$. So the actual multiplicity of $q+1$ is $(q-1)q(q+1)^{g-1}+1$ for any $g\geqslant 1$. The proof of Theorem~\ref{theoremLap} is evident using methods in~\cite{BaPaSa07,Li14,ShAdMi15}.  For convenience of the following discussion we give a similar proof here:

\begin{proof}
The Laplacian matrix of $C_q(g)$ is
\begin{equation}
L(C_q(g))=\left(
\begin{tabular}{c|ccc}
   $L(C_q(g-1))+q I_n$ & $-I_n$ & $\cdots$ & $-I_n$ \\
   \hline
   $-I_n$ &  &  &  \\
   $\vdots$ &  & $\left(L(K_q)+I_q\right)\otimes I_n$ &  \\
   $-I_n$ &  &  &  \\
\end{tabular}\right)
\end{equation}

Let $Y_1,\cdots,Y_{N(g-1)}$ be the Laplacian eigenvectors of $C_q(g-1)$ corresponding to the eigenvalues $\gamma_1(g-1),\gamma_2(g-1),\cdots,\gamma_{N(g-1)}(g-1)$, respectively. For $i=1,\cdots, N(g-1)$, let
\begin{small}
\begin{equation}
\begin{split}
\phi_i=&\frac{\gamma_i(g-1)+q+1+ \sqrt{(\gamma_i(g-1)+q+1)^2-4\gamma_i(g-1)}}{2}\,,\\
\hat{\phi}_i=&\frac{\gamma_i(g-1)+q+1- \sqrt{(\gamma_i(g-1)+q+1)^2-4\gamma_i(g-1)}}{2}\,.\nonumber
\end{split}
\end{equation}
\end{small}
Note that $\phi_i$, $\hat{\phi}_i$ are Laplacian eigenvalues of $C_q(g)$ corresponding to the eigenvectors
\begin{equation}
\left(
  \begin{array}{c}
     Y_i \\
     g(\phi_i)Y_i \\
     \vdots \\
     g(\phi_i)Y_i \\
  \end{array}
\right)\,,\left(
  \begin{array}{c}
     Y_i \\
     g(\hat{\phi}_i)Y_i \\
     \vdots \\
     g(\hat{\phi}_i)Y_i \\
  \end{array}
\right)\,,\nonumber
\end{equation}
respectively.
 In fact $\phi_i$ is obtained by solving
\begin{small}
\begin{equation}
\begin{split}
&\phi_i\left(
  \begin{array}{c}
     Y_i \\
     g(\phi_i)Y_i \\
     \vdots \\
     g(\phi_i)Y_i \\
  \end{array}
\right)=L(C_q(g))\left(
  \begin{array}{c}
     Y_i \\
     g(\phi_i)Y_i \\
     \vdots \\
     g(\phi_i)Y_i \\
  \end{array}
\right)\\
&=\left(
  \begin{array}{c}
     L(C_q(g-1))+qI_n \\
     -I_n \\
     \vdots \\
     -I_n \\
  \end{array}
\right)Y_i+\sum_{j=1}^{q}\left(
  \begin{array}{c}
     -I_n \\
     (L(K_q)+I_n)_{1j} I_n \\
     \vdots \\
     (L(K_q)+I_n)_{qj}I_n \\
  \end{array}
\right)g(\phi_i)Y_i\\
&=\left(
  \begin{array}{c}
     \gamma_i(g-1)+q-q\cdot g(\phi_i) \\
     -1+g(\phi_i) \\
     \vdots \\
     -1+g(\phi_i)\\
  \end{array}
\right)Y_i\,.
\end{split}
\end{equation}
\end{small}
Thus we can derive the following equations:
\begin{eqnarray}
\gamma_i(g-1)+q-q\cdot g(\phi_i)&=&\phi_i \,,\label{laIt1}\\
-1+g(\phi_i)&=&\phi_i\cdot g(\phi_i)\,.\label{laIt2}
\end{eqnarray}

From Eq.(\ref{laIt2}) we can obtain that $g(\phi_i)\neq 0$. Therefore we can substitute Eq.(\ref{laIt2}) into Eq.(\ref{laIt1}), we have:
\begin{equation}\label{lapEig}
\left(\gamma_i(g-1)+q-\phi_i\right)(\phi_i-1)=-q\,,
\end{equation}
which leads to the result of the first part of the theorem.

If the Laplacian eigenvalues $\nu_1=0, \nu_2=\nu_3=\cdots=\nu_q=q$ of $L(K_q)$ are correlated with the eigenvectors $Z_1,Z_2,\cdots,Z_q$, respectively, then for $j=2,\cdots,q$, we have
\begin{equation}
L(C_q(g))\left(
  \begin{array}{c}
     \mathbf{0} \\
     Z_j\otimes e_i \\
  \end{array}
\right)=(q+1)\left(
  \begin{array}{c}
     \mathbf{0} \\
     Z_j\otimes e_i \\
  \end{array}
\right)\,.
\end{equation}
This completes the proof.
\end{proof}

Next we use the results of the Laplacian spectra to prove Theorem~\ref{spanningT} and \ref{kirchhoff}. First we give an alternative proof of Theorem~\ref{spanningT}.

\begin{proof}
It is known that the number of spanning tree of a graph $G$ has the following form~\cite{Bi93,TzWu00}
\begin{equation}
N_{\textrm{tr}}(G)=\frac{\prod_{i=2}^N \tau_i}{N}\,,
\end{equation}
where $N$ is the number of vertices and $\tau_i$ refers to $N$ eigenvalues of the graph $G$. Given the graph is connected, let $\tau_1$ be the unique zero eigenvalue, then $\tau_i$, $i=2,\cdots,N$ are $N-1$ non-zero eigenvalues of the graph Laplacian.

Theorem \ref{theoremLap} tells that the Laplacian spectrum of $C_q(g)$ consists of two parts. For the first part, we can derive from Eq.(\ref{lapEig}) that, in iteration $g$, $\gamma_i$ in $C_q(g-1)$ generates two eigenvalues $\phi_i$ and $\hat\phi_i$ which are subject to the relations $\phi_i\hat\phi_i=\gamma_i$ and $\phi_i+\hat\phi_i=\gamma_i+q+1$. In particular the trivial eigenvalue $\gamma_1=0$ generates $\phi_i=q+1$ and $\hat\phi_i=0$. As for the second part, there is an eigenvalue $\gamma=q+1$ with multiplicity $(q-1)q(q+1)^{g-1}$.
 We denote by $S(g)$ the sum of all non-zero eigenvalues of $L(C_q(g))$ and by $\Upsilon(g)$ the product of all non-zero eigenvalues of $L(C_q(g))$. Then we can obtain
\begin{small}
\begin{equation}\label{prodRe}
\begin{split}
\Upsilon(g)&=\prod_{i=2}^{N(g)}\gamma_i(g)=(q+1)^{(q-1)q(q+1)^{g-1}+1}\prod_{i=2}^{N(g-1)}\phi_i\hat\phi_i\\
&=(q+1)^{(q-1)q(q+1)^{g-1}+1}\prod_{i=2}^{N(g-1)}\gamma_i(g-1)\\
&=\Upsilon(g-1)\cdot q(q+1)^{(q-1)q(q+1)^{g-1}+1}\,.
\end{split}
\end{equation}
\end{small}
Eq.(\ref{prodRe}) and the initial condition $\Upsilon(0)=q^{q-1}$ yield
\begin{equation}
\Upsilon(g)=q^{q-1} (q+1)^{(q-1) \left((q+1)^g-1\right)+g}\,.\label{prodRS}
\end{equation}
Therefore
\begin{equation}
N_{\textrm{tr}}(C_q(g))=q^{q-2} (q+1)^{(q-1) \left((q+1)^g-1\right)}\,.\nonumber
\end{equation}
\end{proof}
The result is equivalent to what we derived using combinatorial method.

In the following we give an alternative proof of Theorem \ref{kirchhoff} using the spectra information.
\begin{proof}
The Kirchhoff index of a graph $G$ can be expressed as~\cite{AlFi02,GhBoSa08}:
\begin{equation}
R_\textrm{Kr}(G)=N\cdot\sum_{i=2}^N\frac{1}{\tau_i}\,,
\end{equation}
where $N$ and $\tau_i$ are the same as the previous definition.
Let
\begin{equation}
S=\sum_{j=2}^{N} \prod_{{i=2}\atop{i\neq j}}^{N}\gamma_i=\prod_{i=2}^{N}\gamma_i\cdot\sum_{i=2}^{N}\frac{1}{\gamma_i}\,.
\end{equation}
 We can follow the clue of the previous analysis by separating the eigenvalues of its Laplacian matrix into two parts. Recall that the eigenvalues of the Laplacian consist of two parts $\Gamma^{(1)}$ and $\Gamma^{(2)}$ as defined by Theorem \ref{theoremLap}. Assume that $\Gamma^{(1)\prime}=\Gamma^{(1)}\setminus\{0\}$. For the first part of the eigenvalues of $L(C_q(g))$, we denote them as $\phi_i$ and $\hat\phi_i$, $i=1,2,\cdots,N(g-1)$. Suppose the original eigenvalue in $L(C_q(g-1))$, which is correlated with $\phi_i$ and $\hat\phi_i$, is $\gamma_i$. Then
\begin{equation}
S(g)=S^{(1)}\Upsilon^{(2)}+S^{(2)}\Upsilon^{(1)}\,,
\end{equation}
where
\begin{equation}
S^{(1)}=\sum_{\gamma_j\in\Gamma^{(1)\prime}} \prod_{\gamma_i\in\Gamma^{(1)\prime}\atop{i\neq j}}\gamma_i\,,
\end{equation}
\begin{equation}
\gamma^{(2)}=\prod_{\gamma_i\in\Gamma^{(2)}}\gamma_i\,,
\end{equation}
\begin{equation}
S^{(2)}=\sum_{\gamma_j\in\Gamma^{(2)}} \prod_{\gamma_i\in\Gamma^{(2)}\atop{i\neq j}}\gamma_i\,,
\end{equation}
and
\begin{equation}
\gamma^{(1)}=\prod_{\gamma_i\in\Gamma^{(1)}}\gamma_i\,.
\end{equation}
 Accordingly we can obtain
\begin{small}
\begin{equation}
\begin{split}
S^{(1)}=&(q+1)\prod_{i=2}^{N(g-1)}\phi_i\hat\phi_i\cdot\left(\sum_{j=2}^{N(g-1)}\left(\frac{1}{\phi_j}+\frac{1}{\hat\phi_j}\right)\right)+\prod_{i=2}^{N(g-1)}\phi_i\hat\phi_i\\
=&(q+1)\sum_{j=2}^{N(g-1)}\left(\prod_{{i=2}\atop{i\neq j}}^{N(g-1)}\phi_i\hat\phi_i\cdot\left(\phi_j+\hat\phi_j\right)\right)+\prod_{i=2}^{N(g-1)}\phi_i\hat\phi_i\\
=&\Upsilon(g-1)+(q+1)\sum_{j=2}^{N(g-1)} \prod_{{i=2}\atop{i\neq j}}^{N(g-1)}\gamma_i(g-1)\cdot\left(\gamma_j(g-1)+q+1\right)\\
=&\Upsilon(g-1)+(q+1)\sum_{j=2}^{N(g-1)} \prod_{{i=2}\atop{i\neq j}}^{N(g-1)}\gamma_i(g-1)\cdot\gamma_j(g-1)\\
&+(q+1)\sum_{j=2}^{N(g-1)} \prod_{{i=2}\atop{i\neq j}}^{N(g-1)}\gamma_i(g-1)\cdot\left(q+1\right)\\
=&(1+(q+1)\cdot (q(q+1)^{g-1}-1))\cdot\Upsilon(g-1)+(q+1)^2\cdot S(g-1)\,.
\end{split}
\end{equation}

\begin{equation}
\Upsilon^{(2)}=(q+1)^{(q-1)q(q+1)^{g-1}}\,,
\end{equation}
\begin{equation}
S^{(2)}=(q-1)q(q+1)^{(q-1)q(q+1)^{g-1}+g-2}\,,
\end{equation}
\begin{equation}
\Upsilon^{(1)}=(q+1)\cdot \Upsilon(g-1)\,.
\end{equation}
\end{small}
Therefore we obtain a recursive relation of $S(g)$. Considering the initial condition $S(0)=(q-1)q^{q-2}$, we derive
\begin{small}
\begin{equation}
S(g)=q^{q-2} (q+1)^{(q-1) (q+1)^g+g-q-1} \left(\left((2 g+1) q^3-2 q-1\right) (q+1)^g+q
   (q+1)\right)\,,
\end{equation}
\end{small}
thus
\begin{small}
\begin{eqnarray}
\label{spanningTE2}
R_{\textrm{Kr}}(C_q(g))&=&(q+1)^{g-2} \left(\left((2 g+1) q^3-2 q-1\right) (q+1)^g+q (q+1)\right)\nonumber\\
&=&(q^3(2g+1)-2q-1)(q+1)^{2g-2}+q(q+1)^{g-1}\,,
\end{eqnarray}
\end{small}
\end{proof}
Note that Eq.~(\ref{spanningTE2}) is consistent with Eq.~(\ref{spanningTE1}). 
For large $g$, the Kirchhoff index displays the following leading behavior: 
\begin{equation}
R_{\textrm{Kr}}(C_q(g))\sim g\cdot N(g)^2 \sim N(g)^2\log_{q+1}{N(g)}\,,
\end{equation}

\section{Conclusion}
In this paper, we have introduced a deterministically growing  model to generate small-world graph, by using the corona product. The advantage of such a model is that many of its properties can be solved exactly. We have derived explicitly many structural quantities of the small-world model. We have also found the eigenvalues for the adjacency matrix and the Laplacian matrix of the model. In future, the properties of various dynamical processes taking place on the small-world model deserve study.   

\section*{Acknowledgement}
This work was supported by the National Natural Science Foundation of China under Grant No.~11275049.






\begin{thebibliography}{10}
\expandafter\ifx\csname url\endcsname\relax
  \def\url#1{{\tt #1}}\fi
\expandafter\ifx\csname urlprefix\endcsname\relax\def\urlprefix{URL }\fi
\providecommand{\eprint}[2][]{\url{#2}}

\bibitem{ErRe60}
Erd\"{o}s P and R\'{e}nyi A 1960 On the evolution of random graphs {\em Publ. Math. Inst. Hung. Acad. Sci.\/}
  {\bf 5} 17--61

\bibitem{Mi67}
Milgram S 1967 The small world problem {\em Psychol. Today\/} {\bf 2} 60--67

\bibitem{WaSt98}
Watts D~J and Strogatz S~H 1998 Collective dynamics of `small-world' networks {\em Nature\/} {\bf 393} 440--442

\bibitem{Kl00STOC}
Kleinberg J 2000 The small-world phenomenon: An algorithmic perspective {\em
  Proc. 32nd Annual ACM Symp. on Theory of Computing\/} (ACM)
  pp 163--170

\bibitem{ZhCo11}
Zhang Z and Comellas F 2011 Farey graphs as models for complex networks {\em Theor. Comput. Sci.\/} {\bf 412} 865 -- 875

\bibitem{ZhWu15}
Zhang Z and Wu B 2015 Pfaffian orientations and perfect matchings of scale-free networks {\em Theor. Comput. Sci.\/} {\bf 570} 55--69

\bibitem{LiDoQiZh15}
Liu H, Dolgushev M, Qi Y and Zhang Z 2015 Laplacian spectra of a class of small-world networks and their applications {\em Sci. Rep.\/} {\bf 5} 9024

\bibitem{Ne03}
Newman M~E~J 2003 The structure and function of complex networks {\em SIAM Rev.\/} {\bf 45} 167--256

\bibitem{BaWe00}
Barrat A and Weigt M 2000 On the properties of small-world network models {\em Eur. Phys. J. B\/} {\bf 13} 547--560

\bibitem{iJaSo01}
Ferrer-i-Cancho R, Janssen C and Sol{\'e} R~V 2001 Topology of technology graphs: Small world patterns in electronic circuits {\em Phys. Rev. E\/} {\bf 64}
  046119

\bibitem{BaAl99}
Barab\'{a}si A~L and Albert R 1999 Emergence of scaling in random networks {\em Science\/} {\bf 286} 509--512

\bibitem{AlBa02}
Albert R and Barab{\'a}si A~L 2002 Statistical mechanics of complex networks {\em Rev. Mod. Phys.\/} {\bf 74} 47

\bibitem{KlKuRaRaTo99}
Kleinberg J~M, Kumar R, Raghavan P, Rajagopalan S and Tomkins A~S 1999 The web
  as a graph: measurements, models, and methods computing and combinatorics
  {\em Computing and Combinatorics\/} ({\em Lecture Notes in Computer Science\/} vol 1627)
  (Springer) pp 1--17

\bibitem{KuRaRaSrTo99}
Kumar R, Raghavan P, Rajagopalan S and Tomkins A 1999 Extracting large-scale
  knowledge bases from the web {\em Proc. 25th Int.
  Conf. on Very Large Data Bases\/} (Morgan Kaufmann Publishers Inc.) pp
  639--650

\bibitem{ImKl00}
Imrich W and Klav\v{z}ar S 2000 {\em Product graphs, structure and recognition\/}
  ({\em Wiley Series in Discrete Mathematics and Optimization\/} vol~56)
  (Wiley-Interscience)

\bibitem{YoSc07}
Young S~J and Scheinerman E~R 2007 Random dot product graph models for social
  networks {\em Proc. 5th Int. Conf. on Algorithms
  and Models for the Web-graph\/} pp 138--149

\bibitem{We62}
Weichsel P~M 1962 The {K}ronecker product of graphs {\em Proc. Amer. Math. Soc.\/} {\bf
  13} 47--52

\bibitem{LeFa07}
Leskovec J and Faloutsos C 2007 Scalable modeling of real graphs using
  {K}ronecker multiplication {\em Proc. 24th Int.
  Conf. on Machine Learning\/} ICML '07 (ACM) pp 497--504

\bibitem{Ma07}
Mahdian M and Xu Y 2007 Stochastic {K}ronecker graphs {\em Proc. 5th Workshop on Algorithms and Algorithms and Models for the Web-Graph\/}
  (Springer) pp 179--186

\bibitem{Le09}
Leskovec J 2009 Networks, communities and {K}ronecker products {\em Proc. 1st ACM Int. Workshop on Complex Networks Meet Information \& Knowledge Management\/} (ACM) pp 1--2

\bibitem{LeChKlFaGh10}
Leskovec J, Chakrabarti D, Kleinberg J, Faloutsos C and Ghahramani Z 2010 {K}ronecker Graphs: An Approach to Modeling Networks {\em
  J. Mach. Learn. Res.\/} {\bf 11} 985--1042

\bibitem{BaPaSa07}
Barik S, Pati S and Sarma B~K 2007 The spectrum of the corona of two graphs {\em SIAM J. Discrete Math.\/} {\bf 21} 47--56

\bibitem{Li14}
Liu Q 2014 The Laplacian spectrum of corona of two graphs {\em Kragujevac J. Math.\/} {\bf 38} 163--170

\bibitem{HoKi02}
Holme P and Kim B~J 2002 Growing scale-free networks with tunable clustering {\em Phys. rev. E\/} {\bf 65} 026107

\bibitem{Ts08}
Tsourakakis C~E 2008 Fast counting of triangles in large real networks without
  counting: Algorithms and laws {\em Proc. 8th IEEE
  Int. Conf. on Data Mining\/} (IEEE Computer Society) pp
  608--617

\bibitem{Ts11}
Tsourakakis C~E 2011 Counting triangles in real-world networks using projections {\em Knowl. Inf. Syst.\/} {\bf 26} 501--520

\bibitem{Ca89}
Cayley A 1889 A theorem on trees {\em Quart. J. Math\/} {\bf 23} 376--378


\bibitem{KlRa93}
Klein D~J and Randi{\'c} M 1993 Resistance distance {\em J. Math. Chem.\/} {\bf 12} 81--95

\bibitem{ShAdMi15}
Sharma R, Adhikari B and Mishra A 2015 On spectra of corona graphs {\em
  Algorithms and Discrete Applied Mathematics\/} ({\em Lecture Notes in
  Computer Science\/} vol 8959) (Springer) pp 126--137

\bibitem{Bi93}
Biggs N 1994 {\em Algebraic graph theory\/} 2nd ed (Cambridge university press)

\bibitem{TzWu00}
Tzeng W~J and Wu F~Y 2000 Spanning trees on hypercubic lattices and nonorientable surfaces {\em Appl. Math. Lett.\/} {\bf 13} 19--25

\bibitem{AlFi02}
Aldous D and Fill J 2002 Reversible markov chains and random walks on graphs
  \url{http://www.stat.berkeley.edu/~aldous/RWG/book.html}

\bibitem{GhBoSa08}
Ghosh A, Boyd S and Saberi A 2008 Minimizing effective resistance of a graph {\em SIAM Rev.\/} {\bf 50} 37--66

\end{thebibliography}





\end{document}